\newtheorem{proposition}{Proposition}
\newtheorem{lemma}{Lemma}
\newtheorem{remark}{Remark}
\newtheorem{definition}{Definition}
\newtheorem{example}{Example}
\title{Network-Error Correcting Codes using Small Fields}
\begin{document}
\author{
\authorblockN{K.~Prasad and B.~Sundar Rajan,\\}
\authorblockA{Dept. of ECE, IISc, Bangalore 560012, India\\
Email: \{prasadk5,bsrajan\}@ece.iisc.ernet.in\\}
}
\maketitle
\thispagestyle{empty}	
\begin{abstract}
Existing construction algorithms of block network-error correcting codes require a rather large field size, which grows with the size of the network and the number of sinks, and thereby can be prohibitive in large networks. In this work, we give an algorithm which, starting from a given network-error correcting code, can obtain another network code using a small field, with the same error correcting capability as the original code. An algorithm for designing network codes using small field sizes proposed recently by Ebrahimi and Fragouli can be seen as a special case of our algorithm. The major step in our algorithm is to find a least degree irreducible polynomial which is coprime to another large degree polynomial. We utilize the algebraic properties of finite fields to implement this step so that it becomes much faster than the brute-force method. As a result the algorithm given by Ebrahimi and Fragouli is also quickened. 
\end{abstract}
\let\thefootnote\relax\footnotetext
{
Part of the content of this work was presented at ISIT 2011 held at St. Petersburg, Russia during July 31 - Aug. 5, 2011. 
}
\section{Introduction}
\label{sec1}
Network coding was introduced in \cite{ACLY} as a means to improve the rate of transmission in networks. Linear network coding was introduced in \cite{CLY}. Deterministic algorithms exist \cite{KoM,JSCEEJT,Har} to construct \textit{scalar network codes} (in which the input symbols and the network coding coefficients are scalars from a finite field) which achieve the maxflow-mincut capacity in the case of acyclic networks with a single source which wishes to multicast a set of finite field symbols to a set of $N$ sinks, as long as the field size $q>N$. Finding the minimum field size over which a network code exists for a given network is known to be NP hard \cite{LeL}. An algorithm was proposed in \cite{EbF} which attempts to find network codes using small field sizes, given a network coding solution for the network over some larger field size $q>N.$ The algorithms of \cite{EbF} also apply to linear deterministic networks \cite{ADT}, and for \textit{vector network codes} (where the source seeks to multicast a set of vectors, rather than just finite field symbols). In this work, we are explicitly concerned about the scalar network coding problem, although the same techniques can be easily extended to accommodate for vector network coding and linear deterministic networks, if permissible, as in the case of \cite{EbF}.

Network-error correction, which involved a trade-off between the rate of transmission and the number of correctable network-edge errors, was introduced in \cite{YeC} as an extension of classical error correction to a network setting. Along with subsequent works \cite{Zha} and \cite{YaY}, this generalized the classical notions of the Hamming weight, Hamming distance, minimum distance and various classical error control coding bounds to their network counterparts. Algorithms for constructing network-error correcting codes which meet a generalization of the classical Singleton bound for networks can be found in \cite{Zha,YaY,Mat,GFZ}. Using the algorithm of \cite{Mat}, a network code which can correct any errors occurring in at most $\alpha$ edges can be constructed, as long as the field size $q$ is such that 
\[
q>N
\left(
\begin{array}{c}
|\cal{E}| \\
2\alpha
\end{array}
\right),
\]
where $\cal E$ is the set of edges in the network. The algorithms of \cite{Zha,YaY} have similar requirements to construct such network-error correcting codes.  This can be prohibitive when $|\cal E|$ is large, as the sink nodes and the coding nodes of the network have to perform operations over this large field, possibly increasing the overall delay in communication. In \cite{GFZ}, the bound on the field size was further tightened. However, this bound in \cite{GFZ} too potentially grows with the size of the network.

In this work, we propose an algorithm for block network-error correction using small fields. We shall restrict our algorithms and analysis to fields with binary characteristic. The techniques presented can be extended to finite fields of other characteristics without much difficultly. The contributions of this work are as follows.
\begin{itemize}
\item We propose an algorithm to construct network-error correcting codes using small fields, by first designing a network-error correcting code over a large field size using known techniques (for example, \cite{Mat}) and then using algebraic techniques to obtain a network-error correcting code over a smaller field size. The network coding version of this algorithm reduces to the algorithm proposed by Ebrahimi and Fragouli in \cite{EbF}, which we shall refer to as the EF algorithm henceforth. 
\item The major step in our algorithm is to compute a polynomial of least degree coprime with a polynomial, $f(X),$ of possibly large degree.  While it is shown in \cite{EbF} that this can be done in polynomial time, the complexity can still be large. Optimizing based on our requirement, we propose an alternate faster algorithm for computing the polynomial coprime with $f(X).$ This reduces the complexity of the EF algorithm also, which simply adopts a brute force method to do the same.
\item Illustrative examples are shown which indicate that parameters such as the initial network-error correcting code and the choice of representation of the initial large finite field influence the ability of our algorithm to obtain a network-error correcting code over a small field size.
\end{itemize}

The rest of this paper is organized as follows. In Section \ref{sec2}, we give the basic notations and definitions related to network coding, required for our purpose. Also, we review the EF algorithm briefly in Section \ref{sec2}. Section \ref{sec4} presents our algorithm for constructing network-error correcting codes using small field sizes, along with calculations of the complexity of the algorithm.  In Section \ref{sec4}, we also propose a fast way to compute the major step of our algorithm, which is to obtain a least degree polynomial coprime with another polynomial of larger degree. We also show that this fast technique reduces the running time of the EF algorithm. Examples illustrating our algorithm for network coding and error correction are presented in Section \ref{sec5}. Finally, we conclude the paper in Section \ref{sec6} with comments and directions for further research. 

\section{Preliminaries and Background}
\label{sec2}
The model for acyclic networks considered in this paper is as in \cite{CLYZ}. An acyclic network can be represented as an acyclic directed multi-graph ${\cal G} = ({\cal V}, {\cal E})$, where $\cal V$ is the set of all nodes and $\cal E$ is the set of all edges in the network. We assume that every edge in ${\cal G}$ can carry at most one symbol from a finite field $\mathbb{F}_q$. Network links with capacities greater than unity are modeled as parallel edges. The network is assumed to be instantaneous, i.e., all nodes process the same \emph {generation} (the set of symbols generated at the source at a particular time instant) of input symbols to the network in a given coding order (ancestral order \cite{CLYZ}). For an edge $e,$ let $tail(e)$ and $head(e)$ denote the start node and the end node of $e.$ An ancestral ordering can be assumed on $\cal E$ as the network is acyclic. Let $s\in\cal V$ be the source node and $\cal T$ be the set of $N(=|{\cal T}|)$ receivers. Let $h_{_T}$ be the unicast capacity for a sink node  $T\in{\cal T}$, i.e., the maximum number of edge-disjoint paths from $s$ to $T$. Then $h = \min_{T\in{\cal T}}h_{_T}$ is the max-flow min-cut capacity of the multicast connection. 

A $h'$-dimensional network code ($h'\leq h$) is one which can be used to transmit $h'$ symbols simultaneously from $s$ to all sinks $T\in {\cal T},$ and can be described \cite{KoM} by the following matrices, each having elements from the finite field $\mathbb{F}_q$. 
\begin{itemize}
\item A matrix $A$ (of size $h' \times |{\cal E}|$), which describes the way the source maps symbols onto the network. The entries of $A$ are defined as
\[
A_{i,j}=\left\{
\begin{array}{cc}
\alpha_{i,e_j} & \text{   if } s = tail(e_j),\\
0  & \text{ otherwise},
\end{array}
\right. 
\]
where $\alpha_{i,e_j} \in \mathbb{F}_q$ is the network coding coefficient at the source coupling input $i$ with edge $e_j.$
\item A matrix $K$ (of size $|{\cal E}| \times |{\cal E}|$), which describes how the symbols are processed between the edges of the network. The entries of $K$ are defined as
\[
K_{i,j}=\left\{
\begin{array}{cc}
\beta_{i,j} & \text{   if } head(e_i) = tail(e_j), \\
0  & \text{ otherwise}, 
\end{array}
\right. 
\]
where $\beta_{i,j} \in \mathbb{F}_q$ is the local encoding kernel coefficient  between $e_i$ and $e_j$.
\item $D_T$ (of size $|{\cal E}| \times h'$ for every sink $T \in \cal T$), which describes how the symbols received by the sink $T$ are processed. The entries of the matrix $D_T$ are defined as  
\[
{D_T}_{i,j}=\left\{
\begin{array}{cc}
\epsilon_{e_j,i} & \text{   if } head(e_j) = T,\\
0  & \text{ otherwise}, \end{array} 
\right. 
 \]
where $\epsilon_{e_j,i} \in \mathbb{F}_q$ describes the coupling between the symbols on $e_j$ and the $i^{th}$ input.
\end{itemize}
Let $F = (I-K)^{-1},$ where $I$ is the identity matrix of size $|{\cal E}|.$ Note that $F$ is well defined as $(I-K)$ is an invertible matrix, as $K$ is strictly upper-triangular. We then have the following definition.
\begin{definition}
\cite{KoM}
\label{nettransfermatrix}
\textit{The network transfer matrix}, $M_{T}$ for a $h'$-dimensional network code, corresponding to a sink node ${T} \in \cal T$ is a full rank $h' \times h'$ matrix defined as 
$
M_{T}:=AFD_{T}=AF_{T},
$
where $F_T:=FD_{T}.$
\end{definition} 	

The matrix $M_T$ governs the input-output relationship at sink $T.$ The problem of designing a $h'$-dimensional network code then implies making a choice for the matrices $A, F,$ and $D_T,$ such that the matrices $\left\{M_T : T \in {\cal T}\right\}$ have rank $h'$ each. We thus consider each element of $A, F$, and $D_T$ to be a variable $X_i$ for some positive integer $i$, which takes values from the finite field $\mathbb{F}_q.$ Let $\left\{X_i\right\}$ be the set of all variables, whose values define the network code. The variables $X_i$s are known as the \textit{local encoding coefficients} \cite{CLYZ}. For an edge $e$ in a network with a $h'$-dimensional network code in place, the \textit{global encoding vector} \cite{CLYZ} is a $h'$ dimensional vector which defines the particular linear combination of the $h'$ input symbols which flow through $e.$ It is known  \cite{KoM,JSCEEJT,Har} that deterministic methods of constructing a $h$-dimensional network code exist, as long as $q > N.$

Let $\Lambda$ be the length of the longest path from the source to any sink. Because of the structure of the matrices $A, F$
and $D_T$, it is seen \cite{EbF} that the matrix $M_T$ has degree at most $\Lambda$ in any particular variable $X_i$ and also a total degree (sum of the degrees across all variables in any monomial) of $\Lambda$. Let $f_{_T}\left(X_1,X_2, ..X_{|\left\{X_i\right\}|}\right)$ be the determinant of $M_T$ and $f(X_1,X_2, ..X_{|\left\{X_i\right\}|}) = \prod_{T\in{\cal T}}f_{_T}.$ Then the degree in any variable (and the total degree) of the polynomials $f_{_T}$ and $f$ are at most $h'\Lambda$ and $Nh'\Lambda$ respectively. 

A brief version of the EF algorithm is given in Algorithm \ref{alg:construction}. 
\begin{algorithm}
$(1)$ Assign values $\alpha_i$s to the scalar coding coefficients $X_i$s from an appropriate field $\mathbb{F}_{2^k} \left(2^k=2^{\lceil log(N) \rceil +1} > N \right)$ such that the network transfer matrices	 $M_T$s to all the sinks are invertible. 

$(2)$ Express every $X_i=\alpha_i$ as a binary polynomial $p_i(X)$ of degree at most $k-1$ using the usual polynomial representation of the finite field $\mathbb{F}_{2^k},$ for a particular choice of the primitive polynomial of degree $k.$

$(3)$ Substituting these polynomials representing the $X_i$s in the matrices $M_T,$ calculate the determinants of $M_T$ as the polynomials $f_{_T}(X) \in \mathbb{F}_2[X],$ and also find $f(X)=\prod_{T\in{\cal T}}f_{_T}(X).$ Then, $f(X)$ is non-zero and has degree at most $N(k-1)h\Lambda$ in the variable $X.$

$(4)$ Find an irreducible polynomial of least degree, $g(X),$ which is coprime with $f(X).$  

$(5)$ Let $X_i= p_i(X)(\text{mod}~g(X)).$ Thus, each $X_i$ can be viewed as an element in $\frac{\mathbb{F}_2[X]}{\left(g(X)\right)}.$ Also, for each sink $T,$ the matrices $M_T$ remain invertible as $f_{_T}(X)(\text{mod}~g(X))\neq 0,$ as $f(X)(\text{mod}~g(X))\neq 0.$ 
\caption{Scalar network coding algorithm using small fields - \cite{EbF}\vspace{0.2cm}}
\label{alg:construction}
\end{algorithm}
Note that the key step in Algorithm \ref{alg:construction} is step (4), where an irreducible polynomial $g(X)$ of least degree is to be found. It is shown in \cite{EbF} that such a coprime $g(X)$ exists and and can be computed with $O\left(n^2log(n)\right)$ operations, where $n=deg(f(X))=Nh\Lambda \lceil log(N) \rceil.$
\section{Network-error Correcting codes using small fields}
\label{sec4}
This section presents the major contribution of this work. After briefly reviewing the network-error correcting code construction algorithm in \cite{Mat}, we proceed to give an algorithm which can obtain network-error correcting codes using small finite fields.  
\subsection{Network-Error Correcting Codes - Approach of \cite{Mat}}
An edge is said to be in error if its input symbol and output symbol (both from some appropriate field $\mathbb{F}_q$) are not the same. We model the edge error as an additive error from $\mathbb{F}_q$. A \textit{network-error} is a $|\cal E|$ length vector over $\mathbb{F}_q$, whose components indicate the additive errors on the corresponding edges. A network code which enables every sink to correct any errors in any set of edges of cardinality at most $\alpha$ is said to be an $\alpha$ \textit{network-error correcting code}.  There have been different approaches to network-error correction \cite{YeC,Zha,YaY,Mat,GFZ}. We concern ourselves with the notations and approach of \cite{Mat}, as the algorithm in \cite{Mat} lends itself to be extended according to the techniques of \cite{EbF}.   

It is known \cite{YeC} that the number of messages $M$ in an $\alpha$ network-error correcting code is upper bounded according to the \textit{network Singleton bound} as 
$
M \leq q^{h-2\alpha}.
$
Assuming that the message set is a vector space over $\mathbb{F}_q$ of dimension $k,$ we have 
$
k \leq h-2\alpha.
$

A brief version of the algorithm given in \cite{Mat} for constructing an $\alpha$ network-error correcting code for a given single source, acyclic network that meets the network Singleton bound is shown in Algorithm \ref{alg:necc}. The construction of \cite{Mat} is based on the network code construction algorithm of \cite{JSCEEJT}. The algorithm constructs a network code such that all network-errors in up to $2\alpha$ edges will be corrected as long as the sinks know where the errors have occurred. Such a network code is then shown \cite{Mat} to be equivalent to an $\alpha$ network-error correcting code. Other equivalent (in terms of complexity) network-error correction algorithms can be found in \cite{Zha} \cite{YaY}.

One way to understand Algorithm \ref{alg:necc} which is relevant to our work is as follows. For each subset $F\in {\cal F}$ of $\cal E,$ Algorithm \ref{alg:necc} considers a subnetwork of the original network consisting of $k$ edge-disjoint paths from the imaginary source $s'$ to each sink $T\in {\cal T}$ and also $m_T^F$ edge-disjoint paths from $s'$ passing through the edges of $F$ to each sink $T$ which are also edge-disjoint with the $k$ paths from $s'$. On this subnetwork, Algorithm \ref{alg:necc} chooses network coding coefficients such that the $k$ information symbols can still be multicast to each sink $T$ irrespective of whatever information may flow on the $m_T^F$ paths. If the same choice of coefficients can be chosen to satisfy this multicast-like constraint for each $F \in {\cal F},$ then there is a valid $\alpha$ network-error correcting code which can be used to multicast the $k$ information symbols from the source to all sinks in the network. This understanding of Algorithm \ref{alg:necc} is key to understanding our algorithm for obtaining network-error correcting codes for small field sizes. For further details on Algorithm \ref{alg:necc}, the reader is referred to \cite{Mat}.
\begin{algorithm}
$(1)$ Let $\cal F$ be the set of all subsets of $\cal E$ of size $2\alpha.$ Add an imaginary source $s'$ and draw $k=h-2\alpha$ edges from $s'$ to $s.$

$(2)$ \ForEach{$F \in {\cal F}$}
{
$(i)$ Starting from the original network, add an imaginary node $v$ at the midpoint of each edge $e\in F$ and add an edge of unit capacity from $s'$ to each $v.$ 

$(ii)$ \ForEach{sink $T\in \cal T$} 
{
Draw as many edge disjoint paths from $s'$ to $T$ passing through the imaginary edges added at Step $(i)$ as possible. Let $m_{_T}^F (\leq 2\alpha)$ be the number of such paths. 

Draw $k$ edge disjoint paths passing through $s$ that are also edge disjoint from the $m_{_T}^F$ paths drawn in the previous step.
}

$(iii)$ Based on the techniques shown in the network coding algorithm of \cite{JSCEEJT} on the subnetwork comprising of the identified edge disjoint paths, obtain a network code with the following property. Let $B_T^F$ be the $\left(k+2\alpha\right) \times \left(k+m_{_T}^F\right)$ matrix, the columns of which are the $h$ length global encoding vectors (representing the linear combination of the $k$ input symbols and $2\alpha$ error symbols) of the incoming edges at sink $T$ corresponding to the $k+m_{_T}^F$ edge disjoint paths. Then $B_T^F$ must be full rank. As proved in \cite{Mat}, this ensures that the network code thus obtained is $\alpha$ network-error correcting and meets the network Singleton bound.
}
\caption{\vspace{0.1cm}Algorithm of \cite{Mat} for constructing a network-error correcting code that meets the network Singleton bound.\vspace{0.1cm}}
\label{alg:necc}
\end{algorithm}

It is shown in \cite{Mat} that Algorithm \ref{alg:necc} results in a network code which is an $\alpha$ network-error correcting code meeting the network Singleton bound, as long as the field size 
\begin{equation}
\label{eqn15}
q > |{\cal T}||{\cal F}|= N
\left(
\begin{array}{c}
|{\cal E}| \\
2\alpha 
\end{array}
\right).
\end{equation}
The above bound on field size was further tightened in \cite{GFZ}, where it was shown that a construction of an $\alpha$ network-error correcting code is possible if the field size $q$ is such that 
\begin{equation}
\label{eqn15a}
q > \sum_{T \in{\cal T}} |R_T(\alpha)|,
\end{equation}
where $R_T(\alpha)$ is a set defined in \cite{GFZ} for the sink $T$ in the following way.
\begin{definition}
For a sink $T,$ the set $R_T(\alpha)$ is the set of all subsets of size $2\alpha$ of the edge set ${\cal E}$ satisfying the following properties for each $\rho\in R_T(\alpha)$ .
\begin{itemize}
\item A collection of $k$ edge-disjoint paths starting from the $k$ imaginary incoming edges at the source node $s$ to sink node $T$ can be found.
\item A collection of $2\alpha$ edge-disjoint paths starting from each of the  $2\alpha$ edges to the sink $T$ in $\rho$ can be found, such that all these paths are also edge-disjoint from the $k$ paths from $s.$ 
\end{itemize}
\end{definition}
An algorithm is shown in \cite{GFZ} to construct $\alpha$ network-error correcting codes if the field size is greater than $q > \sum_{T \in{\cal T}} |R_T(\alpha)|.$ In many networks (see \cite{GFZ}, for example), this bound in (\ref{eqn15a}) could be smaller than the bound in (\ref{eqn15}). However, in this work, we use the Algorithm \ref{alg:necc} which is from \cite{Mat} rather than the algorithm from \cite{GFZ}. We shall however give the value of the bound in (\ref{eqn15a}) for an example network and show that our algorithm to obtain network-error correcting codes over small fields can obtain field sizes smaller than that of the bound in (\ref{eqn15a}) also. 

\subsection{Network-Error Correction using Small Fields - Algorithm}
Algorithm \ref{alg:necclowfieldsize} constructs a network-error correcting code using small field sizes (conditioned on the existence of an irreducible polynomial of small degree satisfying the necessary requirements indicated in Step (5) of Algorithm \ref{alg:necclowfieldsize}). Note that for the case $\alpha=0,$ Algorithm \ref{alg:necclowfieldsize} reduces to the EF algorithm, i.e., Algorithm \ref{alg:construction}.
\begin{algorithm}
$(1)$ With $q = 2^{\left\lceil log\left(N|{\cal F}|\right)\right\rceil+1} = 2^k,$ 
run Algorithm \ref{alg:necc} to find an $\alpha$ network-error correcting code meeting the network Singleton bound. Let the encoding coefficients for $X_i$ be $\alpha_i.$

$(2)$ Express every $X_i=\alpha_i$ as a binary polynomial $p_i(X)$ of degree at most $k-1$ using the usual polynomial representation of the finite field $\mathbb{F}_{2^k}.$

$(3)$ \ForEach{ $F \in {\cal F}$}
{
\ForEach{sink $T\in {\cal T}$}
{
Find a non-zero minor of the matrix $B_T^F,$ obtained from a $\left(k+m_{_T}^F\right) \times \left(k+m_{_T}^F\right)$ submatrix. At least one such minor exists as $B_T^F$ has rank = $k+m_{_T}^F.$ Let the minor be $f_{_T}^{F}(X),$ which can be of degree at most $h\Lambda log(N|{\cal F}|)$, according to Section \ref{sec2} and the choice of our field size. 
}
}

$(4)$ Calculate the polynomial 
\[
f(X)= \prod_{F\in {\cal F}}\prod_{T\in {\cal T}}f_{_T}^{F}(X),
\]
which has degree at most $N|{\cal F}|h\Lambda log(N|{\cal F}|).$

$(5)$ Find an irreducible polynomial of least degree, $g(X),$ which is coprime with $f(X).$  

$(6)$ Let $X_i= p_i(X)(\text{mod}~g(X)).$ Thus, each $X_i$ can be viewed as an element in $\frac{\mathbb{F}[X]}{\left(g(X)\right)}.$ Because of the fact that $f_{_T}^{F}(X)(\text{mod}~g(X)) \neq 0$ (as $f(X)(\text{mod}~g(X))\neq 0$), the new $B_T^F$ matrices obtained after the modulo operation are also full rank, which implies that the error correcting capability of the code is preserved. 
\caption{Network-error correcting codes under small field sizes}
\label{alg:necclowfieldsize}
\end{algorithm}
As in Algorithm \ref{alg:construction}, the major step of Algorithm \ref{alg:necclowfieldsize} is Step (5) which involves calculating a polynomial $g(X)$ coprime with a given polynomial $f(X).$ According to the complexity calculations in \cite{EbF}, a brute force computation of Step (5) would require $O(n^2log(n))$ computations, $n=deg(f(X))=N|{\cal F}|h\Lambda \lceil log(N|{\cal F}|) \rceil$. Before we propose our method to execute Step (5) efficiently in Subsection \ref{subsecfastcoprime}, we give a justification for Algorithm \ref{alg:necclowfieldsize}.
\subsection{Justification for Algorithm \ref{alg:necclowfieldsize}}
No justification is required for the steps in Algorithm \ref{alg:necclowfieldsize} except Step (5). The justification for Step (5) is as follows. Step (5) finds a $g(X)$ which is coprime with the product polynomial $f(X).$ In fact, in order to ensure that the error correction property of the original network code is preserved, it is sufficient if a polynomial $g(X)$ is coprime with each polynomial $f^F_{_T}(X),$ rather than their product $f(X)$ (as shown in Step (5)). However, the following lemma shows that both are equivalent. 
\begin{lemma}
\label{productgcd}
Let ${\cal U}=\left\{f_i:f_i \in \mathbb{F}[X], i=1,2,...,n\right\}$ be a collection of univariate polynomials with coefficients from some field $\mathbb{F}.$ A polynomial $g \in \mathbb{F}[X]$ is relatively prime with all the polynomials in $\cal U$ if and only if it is relatively prime with their product. 
\end{lemma}
\textit{Proof:} Appendix \ref{appendixproductgcd}.
\subsection{Fast algorithm for computing least degree coprime polynomial}
\label{subsecfastcoprime}
Algorithm \ref{alg:coprime} is a fast method to compute the least degree irreducible polynomial $g(X)$ among irreducible polynomials up to some degree $m$ that is coprime with $f(X).$ 
\begin{algorithm}
$(1)$ Let ${\cal P}=\left\{ X^{2^i}+X:i=1,2,...,m \right\}.$ 

$(2)$ \ForEach{$i=1,2,...,m$}
{
Calculate $r(X)=f(X)(\text{mod}~p_i(X)).$

\If{$r(X)$ is non-zero}
{
Break.
} 
}

$(3)$ Pick $p_j(X)$ as the first polynomial (i.e. least degree) for which $r(X)$ is non-zero. Note that every $p_i(X)\in \cal P$ is the product of all irreducible polynomials whose degree divides $i.$ Also, all irreducible polynomials of degree $i < j$ divide $f(X)$ as all $p_i(X)|f(X)$ for all $i<j.$ Therefore, at least one of the irreducible polynomials of degree $j$ is coprime with $f(X).$ 

$(4)$ Find one such polynomial $g(X)$ of degree $j$ which is coprime with $f(X)(mod~p_j(X))$ and therefore equivalently with $f(X)$ (Subsection \ref{subsec3b} gives a justification of this step).
\caption{Fast algorithm for computing $g(X)$\vspace{0.2cm}}
\label{alg:coprime}
\end{algorithm}
As a result, the key step (Step (5)) of Algorithm \ref{alg:necclowfieldsize} can be performed much faster than having to compute $g(X)$ by brute-force. Similarly, this fast algorithm also enables to quicken the key step (Step (4)) of Algorithm \ref{alg:construction} so that its overall complexity is reduced. 

Note that for using Algorithm \ref{alg:coprime} to implement Step (4) of Algorithm \ref{alg:construction}, we fix $m=\lceil log(N) \rceil$ as any polynomial $g(X)$ coprime with $f(X)$ is useful only if the degree of $g(X)$ is less than $\lceil log(N) \rceil +1,$ as only such a $g(X)$ can result in a network code using a smaller field than the one we started with. For the same reason, in using Algorithm \ref{alg:coprime} in conjunction with Algorithm \ref{alg:necclowfieldsize}, we choose $m=\lceil log(N|{\cal F}|)  \rceil.$ 
\subsection{Justification for Algorithm \ref{alg:coprime}}
\label{subsec3b}
The following lemma ensures that all polynomials which are found to be coprime with $f(X)$ by directly computing the gcd (or the remainder for irreducible polynomials) in the brute force method (as done in Algorithm \ref{alg:construction}), can also be found by running Algorithm \ref{alg:coprime}, using the set of polynomials $\cal P$ up to the appropriate degree.
\begin{lemma}
\label{lemmafindingg}
For some field $\mathbb{F},$ let $f,g\in\mathbb{F}[X]$ be two polynomials. Let $p \in \mathbb{F}[X]$ be such that $g|p.$  Then $g$ is relatively prime with $f$ if and only if $g$ is relatively prime with $f(\text{mod}~p).$
\end{lemma}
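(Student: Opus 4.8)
The plan is to reduce everything modulo $g$ and exploit the hypothesis $g \mid p$. First I would invoke the division algorithm in $\mathbb{F}[X]$ to write $f = \kappa p + r$, where $r = f(\text{mod}~p)$ and $\kappa \in \mathbb{F}[X]$ is the quotient. This turns the relationship between $f$ and its residue $r$ into an explicit polynomial identity, which is exactly what lets me transfer the coprimality hypothesis from the pair $(f,g)$ to the pair $(r,g)$.

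Next I would use $g \mid p$ to observe that $p \equiv 0 \pmod{g}$, so reducing the identity $f = \kappa p + r$ modulo $g$ yields $f \equiv r \pmod{g}$. In other words, $f$ and $r$ leave the same remainder upon division by $g$, so that $f - r = \kappa p$ is a multiple of $g$.

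The conclusion then follows from a routine gcd manipulation. Any common divisor $d$ of $g$ and $r$ divides $g$, hence divides the multiple $\kappa p = f - r$; since $d \mid r$ as well, it divides $f = \kappa p + r$. Thus $d$ is a common divisor of both $f$ and $g$, and because $\gcd(f,g) = 1$ by hypothesis, $d$ must be a unit. Equivalently, $\gcd(r,g) = \gcd(f,g) = 1$, which is precisely the claim that $g$ is relatively prime with $f(\text{mod}~p)$.

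I do not expect a genuine obstacle here, as the statement is an elementary consequence of the division algorithm together with the definition of the gcd. The only point that needs a little care is to keep straight that the reduction is taken modulo $g$, not modulo $p$; this is exactly where the hypothesis $g \mid p$ enters, since without it the congruence $f \equiv r \pmod{g}$ would not hold. A B\'ezout-based alternative, writing $af + bg = 1$ and substituting $f \equiv r \pmod{g}$, would work equally well, but the divisibility argument above is the cleanest.
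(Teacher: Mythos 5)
Your proof is correct and rests on the same key observation as the paper's: writing $f = \kappa p + r$ via the division algorithm and using $g \mid p$ to conclude that $f$ and $r = f(\text{mod}~p)$ differ by a multiple of $g$. The only cosmetic difference is that the paper certifies coprimality by substituting $p = hg$ into the B\'ezout identity $af + bg = 1$ to exhibit explicit coefficients $1 = ar + (aqh + b)g$, whereas you chase common divisors; as you note yourself, the two finishes are interchangeable.
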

\textit{Proof:} Appendix \ref{appendixlemmafindingg}.
\subsection{Complexity of Algorithm \ref{alg:coprime}} 
The following proposition gives the complexity of Algorithm \ref{alg:coprime} for obtaining the coprime polynomial.
\begin{proposition}
\label{propcomplexitygcd}
The complexity of Algorithm \ref{alg:coprime} is at most $O(2^{2m})+O(mM),$ where $m=|{\cal P}|,$ and $M=deg(f(X)).$
\end{proposition}
\textit{Proof:} Appendix \ref{appendixpropcomplexitygcd}.
\begin{remark}
Note that the worst-case complexity of Algorithm \ref{alg:coprime} with $m=\lceil log(N) \rceil$ and $M=hN\lceil log(N)\rceil \Lambda$ (corresponding to values required for running Step(4) of Algorithm \ref{alg:construction}) is $O(N^2)+O(hN\Lambda(log(N))^2).$ This is clearly lesser than the worst-case complexity of finding the coprime polynomial $g(X)$ by brute-force, indicated in Section \ref{sec2}. Even if we test for coprimeness only for polynomials up to degree $\lceil log(N) \rceil,$ a brute-force execution of Step (4) of Algorithm \ref{alg:construction} would have a worst-case complexity of $O\left(N^2h\Lambda log(n)\right)$ (where $n=Nh\Lambda log(N)$), which is still greater than that of ours. 
\end{remark}
\subsection{Complexity of Algorithm \ref{alg:necclowfieldsize}} 
We now calculate the complexity of Algorithm \ref{alg:necclowfieldsize} (with Algorithm \ref{alg:coprime} used to implement its key step). The complexities of all the steps of Algorithm \ref{alg:necclowfieldsize} is given by Table \ref{tab1}, along with the references and reasoning for the mentioned complexities.

The only complexity calculations of Table \ref{tab1} which are not straightforward are the complexities involved in calculating the polynomial $g(X)$ coprime to $f(X)$ and in calculating the non-zero minor of the matrix $B_T^F.$ The complexity of calculating $g(X)$ can be calculated using Proposition \ref{propcomplexitygcd} using the values $m=\lceil N|{\cal F}| \rceil$ and $M=Nh|{\cal F}|\Lambda \lceil log\left(N|{\cal F}|\right)\rceil $. 

Now for calculating the non-zero minor of the matrix $B_T^F.$ There are $\left(\begin{array}{c}h \\k+m_{_T}^F \end{array}\right)$ such minors, and calculating each takes $O\left(\left(k+m_{_T}^F\right)^3\right)$ multiplications over $\mathbb{F}_q.$  As $\left(k+m_{_T}^F\right)$ can take values up to $h,$ clearly the function to be maximized is of the form $f(m)=\left(\begin{array}{c}h \\m \end{array}\right)m^3,$ for $m=0,1,...,h.$ Proposition \ref{thmcomplexity} gives the value of $m$ for which such a function is maximized, based on which the value in Table \ref{tab1} has been calculated. 
\begin{proposition}
\label{thmcomplexity}
For some positive integer $n,$ let $m$ be an integer such that $0\leq m \leq n.$ The function 
$
f(m)=
\left(\begin{array}{c}n \\m \end{array}\right)
m^3
$
is maximized at 
$
m = \left\{ 
\begin{array}{cc}
\left(\lceil\frac{n}{2}\rceil +1\right) & ~\text{if}~n \geq 2\\
1 & ~\text{if}~ n =1. 
\end{array}
\right.
$  
\end{proposition}
\textit{Proof:} Appendix \ref{appendixthmcomplexity}.
\begin{table*}
\caption{Complexity calculations for Algorithm \ref{alg:necclowfieldsize}}
\label{tab1}
\centering
\normalsize
\begin{tabular}{|c|c|c|c|}
\hline
\textbf{Step(s)} & \textbf{Complexity} & \textbf{Reasoning}\\
\hline
Algorithm \ref{alg:necc} & $A:=O\left(|{\cal F}|Nh\left(|{\cal E}||{\cal F}|N+|{\cal E}|+h+2\alpha\right)\right).$ & \cite{Mat} \\
\hline
Identifying non-zero minor of matrix $B_T^F$ & $B:=O\left(\left(
\begin{array}{c}
h \\
m
\end{array}
\right)m^3 \right),$ with $m=\left(\lceil\frac{h}{2}\rceil +1\right)$  & Theorem \ref{thmcomplexity}\\
\hline
Computing the non-zero minor (over $\mathbb{F}_2[X]$) of $B_T^F$ & $C:=O\left(h^4\Lambda log(N|{\cal F}|)\right)+O\left(\left(h\Lambda log\left(N|{\cal F}|\right)\right)^3\right) $ & \cite{EbF}\\
from a $(k+m_{_T}^F)$ square submatrix & & \\
\hline
Calculating $f(X)= \prod_{F\in {\cal F}}\prod_{T\in {\cal T}}f_{_T}^F(X).$ & $D:=O\left(alog(a)\right),$ where $a=Nh|{\cal F}|\Lambda log\left(N|{\cal F}|\right)$ & \cite{BoM}\\
\hline
Computing the coprime polynomial $g(X)$ & $E:=O\left( N^2|{\cal F}|^2\right)+O(Nh|{\cal F}|\Lambda log\left(N|{\cal F}|\right)^2).$ & Proposition \ref{propcomplexitygcd}\\
\hline
\hline
Total complexity & \multicolumn{2}{|c|}{$A + N|{\cal F}|(B + C + D) + E$} \\
\hline
\end{tabular}
\end{table*}
\section{Illustrative example - Network-Error Correction}
\label{sec5}
The performance of Algorithm \ref{alg:construction} (together with Algorithm \ref{alg:coprime}) for a network coding problem on a combination network is shown in Appendix \ref{appexm}. We now present a network-error correction example that uses Algorithm \ref{alg:necclowfieldsize} (with Algorithm \ref{alg:coprime}).
\begin{example}
\label{exm2}
Consider the network, with $18$ edges, shown in Fig. \ref{fig:necnetwork}. This network is from \cite{YaY}, in which a $1$ network-error correcting code meeting the network Singleton bound is given by brute-force construction for this network over $\mathbb{F}_4,$ which is the smallest possible field over which such a code exists. 
\begin{table*}
\small
\centering
\caption{Using Algorithm \ref{alg:necclowfieldsize} for the network in Fig. \ref{fig:necnetwork}} 
\begin{tabular}{|c|c|c|}
\hline
\textbf{Algorithm parameter} & \textbf{Network code defined by} $\boldsymbol{\cal A}$ &   \textbf{Network code defined by} $\boldsymbol{\cal B}$\\
\hline
Degree of $f(X)$, & & \\
the product of the $306$ determinant polynomials & $260$ & $978$ \\
\hline
$p(X)$: First $p_i(X)$ for which $f(X)(\text{mod}~p_i(X))$ is non-zero & $X^8+X$ & $X^4+X$\\
\hline
$f(X)(\text{mod}~p(X))$ & $X^7+X^6+X^3+X^2$ & $X^3+X$\\
\hline
$g(X)$: Least degree polynomial coprime to $f(X)$ & $X^3+X+1$ & $X^2+X+1$\\
\hline
$\left\{X_1,X_2,X_3\right\}$ after the algorithm & $\left\{\beta_8^1,\beta_8^3,\beta_8^3\right\}$ & $\left\{\beta_4,\beta_4,\beta_4\right\}$ \\
\hline
\end{tabular}
\label{tab3}
\end{table*}
According to the algorithm in \cite{Mat}, a $1$ network-error correcting code can be constructed deterministically if $q>2\left(\begin{array}{c}18 \\2 \end{array}\right)=306.$ In Fig. \ref{fig:necnetwork}, let the variable $X_1$ denote the encoding coefficient between edges $v_1\rightarrow v_4$ and $v_4\rightarrow v_6.$ Similarly, let the variable $X_2~(X_3)$ denote the local encoding coefficients between $v_2\rightarrow v_5~(v_6\rightarrow v_7)$ and $v_5\rightarrow v_8~(v_7\rightarrow v_9).$ 

\begin{figure}[htbp]
\centering
\includegraphics[totalheight=2.5in,width=3.2in]{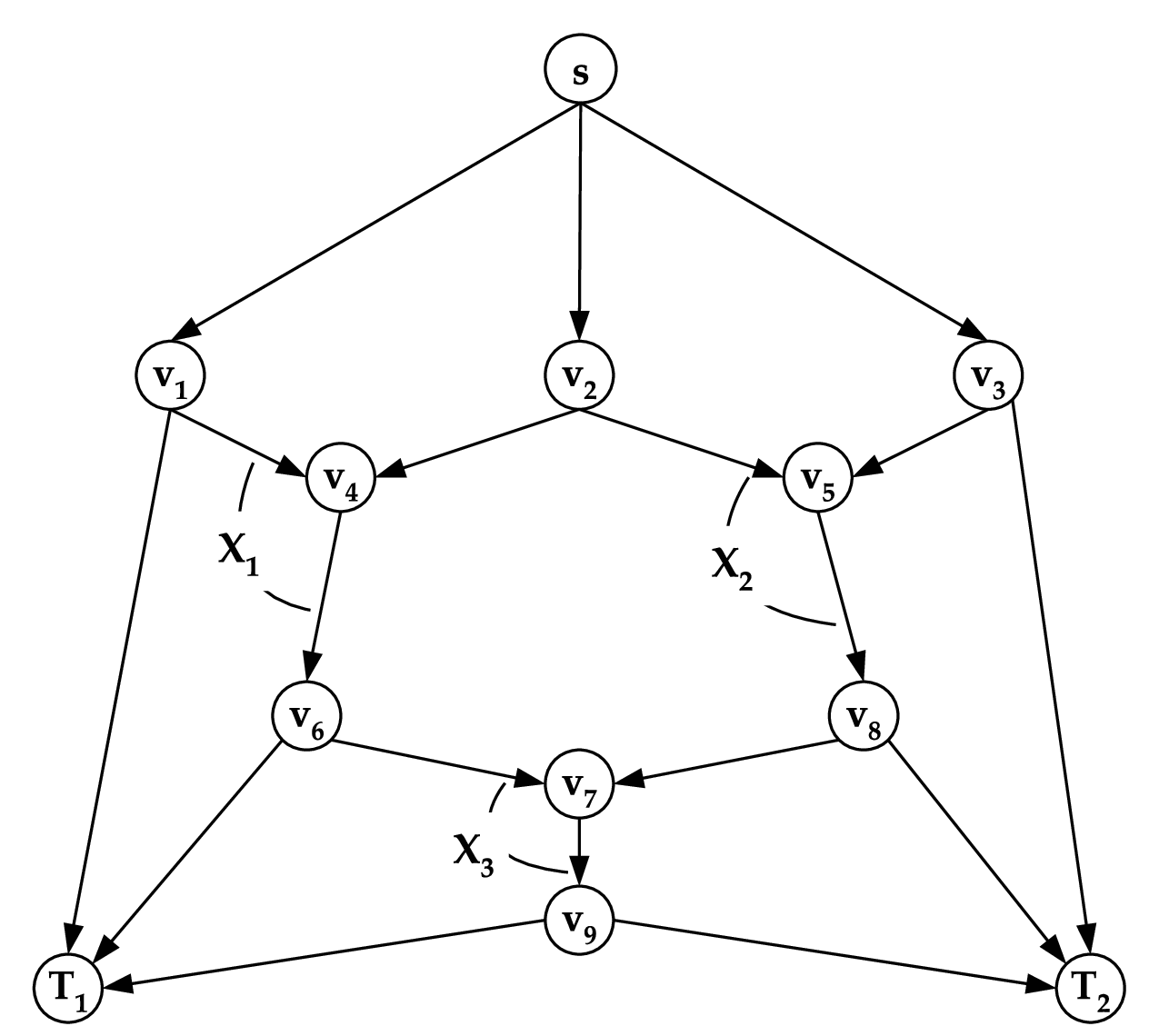}
\caption{Example network for network-error correction}	
\label{fig:necnetwork}	
\end{figure}

Let $q=2^9.$ Let ${\cal A}=\left\{\beta,\beta^{130},\beta^{130}\right\}$ and ${\cal B}=\left\{\beta^{132},\beta^{391},\beta^{391}\right\}$, where $\beta$ is a primitive element of $\mathbb{F}_{2^9}.$ Let $b_1(X)=X^9+X^4+1$ be the primitive polynomial of degree $9$ under consideration.

Consider two such $1$ network-error correcting codes obtained using Algorithm \ref{alg:necc} for the network of Fig. \ref{fig:necnetwork} as follows. Let $\cal A$ and $\cal B$ be two choices for the set $\left\{X_1,X_2,X_3\right\}$ with all the other local encoding coefficients being unity.  It can be verified that these two network codes can be used to transmit one error-free $\mathbb{F}_{2^9}$ symbol from the source to both sinks, as long as only single edge errors occur in the network. Table \ref{tab3} gives the results of running Algorithm \ref{alg:necclowfieldsize} for this network starting from these two codes, with $\beta_4$ and $\beta_8$ being the primitive elements of $\mathbb{F}_4$ and $\mathbb{F}_8$  respectively. 

Except for  $\left\{X_1,X_2,X_3\right\},$ all the other coding coefficients remain $1$ over the respective fields. It is seen from Table \ref{tab3} that the initial choice of the sets $\cal A$ and $\cal B$ for $\left\{X_1,X_2,X_3\right\}$ affects the complexity of the problem (i.e., degree of $f(X)$) and also the field size of the final network code. With $\cal B$, the resultant network-error correcting code is over $\mathbb{F}_4,$ exactly the one reported in \cite{YaY} by brute force 	construction. Also, for sink $T_1$ and $T_2,$ the value of $R_T(\alpha)$ can be computed to be $R_{T_1}(1)=R_{T_2}(1)=65.$ Thus the bound from \cite{GFZ} shown in (\ref{eqn15a}) for this network can be computed to be $q>130.$ The field size of the network-error correcting code found using our algorithm can therefore still be lesser than that of the bound in \cite{GFZ}.
\end{example}
\section{Concluding remarks}
\label{sec6}

As in the original paper \cite{EbF}, questions remain open about the designing of a code using the minimal field size. The hardness of calculating the minimal field size is reflected by the fact that the initial choice of the network code and the primitive polynomial of the field over which the initial code is defined (using which the local encoding coefficients are represented as polynomials) control the resultant field size after the algorithm. These issues are illustrated by the examples in Section \ref{sec5} and Appendix \ref{appexm}. However, it would be interesting to see if guarantees on the reduction of the field size can be given.

\appendices
\section{Proof of Lemma \ref{productgcd}}
\label{appendixproductgcd}
\begin{IEEEproof}
\textit{If part:} If $g$ is relatively prime with the product of all the polynomials in $\cal U,$ then there exist polynomials $a,b \in \mathbb{F}[X]$ such that 
\begin{align}
\label{eqn102}
a\left(\prod_{i=1}^{n}f_i\right)+bg=1.
\end{align}
For each $j \in {1,2,...,n},$ we can rewrite (\ref{eqn102}) as
\begin{align*}
\left(a\prod_{i=1,i\neq j}^{n}f_i\right)f_j+bg=1,
\end{align*}
which implies that $g$ is coprime with each $f_j\in {\cal U}.$

\textit{Only if part:} Suppose $g$ is relatively prime with all the polynomials in $\cal U.$ Then, for each $j \in {1,2,...,n},$ we can find polynomials $a_j$ and $b_j$ such that, $a_jf_j +b_jg=1.$ In particular, 
\begin{align}
\label{eqn103}
a_1f_1 +b_1g=1,\\
\label{eqn104}
a_2f_2 +b_2g=1.
\end{align}
Using (\ref{eqn104}) in (\ref{eqn103}), 
\begin{align*}
1&=a_1f_1(a_2f_2 +b_2g) +b_1g\\
&=(a_1a_2)f_1f_2+(a_1f_1b_2+b_1)g.
\end{align*}
Thus, $g$ is relatively prime with $f_1f_2.$ Continuing with the same argument, it is clear that $g$ is relatively prime with $\prod_{i=1}^{n}f_i.$
\end{IEEEproof}
\section{Proof of Lemma \ref{lemmafindingg}}
\label{appendixlemmafindingg}
\begin{IEEEproof}
Let $f=qp+r$ for the appropriate quotient and remainder polynomials $q,r \in \mathbb{F}[X]$ with $deg(r) < deg(p).$ Also, as $g|p,$ let $p=hg,$ for the appropriate $h\in \mathbb{F}[X].$

\textit{If part:}
As $r=f(mod~p)$ and $g$ are relatively prime with each other, we can obtain polynomials $a',b' \in \mathbb{F}[X]$ such that $a'r+b'g=1.$ Then, we must have
\begin{align*}
1&=a'(f-qp)+b'g\\
&=a'(f-qhg)+b'g \\
&=a'f+(b'-a'qh)g.
\end{align*}
Thus $f$ and $g$ must be coprime with each other. 

\textit{Only If part:}
Now assume that $f$ and $g$ are coprime with each other. This means we can obtain polynomials $a,b \in \mathbb{F}[X]$ such that $af+bg=1.$ Then, 
\begin{align*}
1&=a(qp+r)+bg\\
&=a(qhg+r)+bg \\
&=ar+(aqh+b)g,
\end{align*}
which means that $g$ and $r$ are coprime with each other, hence proving the lemma.
\end{IEEEproof}
\section{Proof of Proposition \ref{propcomplexitygcd}}
\label{appendixpropcomplexitygcd}
Towards proving Proposition \ref{propcomplexitygcd}, we first prove the following lemma. 
\begin{lemma}
\label{lemmafindinggcomplexity}
Let $f,p \in \mathbb{F}_2[X],$ be such that $deg(f)=M$ and $p=X^n+X,$ for some non-negative integers $M$ and $n.$ The polynomial $f(\text{mod}~p)$ can be calculated using at most $O(M)$ bit additions.
\end{lemma}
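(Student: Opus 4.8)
The plan is to reduce $f(X) \bmod (X^n + X)$ by exploiting the simple structure of the modulus: since $X^n \equiv X \pmod{X^n + X}$, every monomial $X^d$ with $d \geq n$ can be collapsed downward. First I would observe that the reduction relation $X^n \equiv X$ lets us replace any exponent $d \geq n$ by a smaller exponent. Concretely, writing $d = n + t$ with $t \geq 0$, we get $X^d = X^t \cdot X^n \equiv X^t \cdot X = X^{t+1} \pmod{X^n + X}$, so each application of the rule reduces the exponent by exactly $n - 1$. Thus a monomial $X^d$ reduces to $X^{d'}$ where $d' \equiv d \pmod{n-1}$ lies in the range $[1, n-1]$ (taking some care with the constant term and low-degree terms, which are already reduced).

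Next I would set up the cost accounting. Over $\mathbb{F}_2$, reducing $f$ modulo $p = X^n + X$ amounts to folding the coefficient vector of $f$ onto the $n$ low-order positions and adding up (over $\mathbb{F}_2$, i.e.\ XORing) all coefficients that land on the same reduced exponent. Since $f$ has degree $M$, it has at most $M+1$ coefficients. Each coefficient of $f$ contributes to exactly one position in the reduced polynomial, and combining two coefficients that collide at the same position costs a single bit addition. Therefore the total number of bit additions is bounded by the number of coefficients of $f$, which is $O(M)$.

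The main step to make rigorous is that each monomial can be assigned to its reduced position in $O(1)$ amortized work and that the collisions are handled by single XORs; the key point is simply that the reduction map sends the $M+1$ coefficient positions into the $n$ target positions, so the number of additions cannot exceed the number of source coefficients. I expect the main obstacle to be purely bookkeeping: being careful about whether the folding is a single pass or requires iterated reduction (if $M$ is much larger than $n$, a monomial $X^d$ may need several applications of $X^n \equiv X$ before its exponent falls below $n$). I would handle this by noting that iterated reduction does not increase the asymptotic count, since each coefficient is ultimately deposited at one final position and the intermediate collapses can be organized so that the total work remains linear in the number of nonzero coefficients, hence $O(M)$ bit additions overall.
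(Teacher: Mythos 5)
Your proposal is correct and is essentially the same argument as the paper's: the paper arranges the coefficients of $f$ into rows of length $n-1$ (keeping $f_0$ aside) and adds the rows, which is exactly your folding of each coefficient $f_d$ onto the residue of $d$ modulo $n-1$ induced by $X^n \equiv X$, with the same $O(M)$ count of bit additions.
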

\begin{IEEEproof}
Let $f=\sum_{i=0}^{M}f_iX^i.$ We arrange the coefficients of $f$ as follows.
\[
\begin{array}{ccccccc}
f_0 & f_1 & f_2 &...&...&...&f_{n-1}\\
~&  f_n & f_{n+1}&...&...&...& f_{2n-2}\\
~&  f_{2n-1} & f_{n+1}&...& ...& ...&f_{3n-3}\\
~& . & . & ... & ... &... & . \\
~& . & . & ... & ... & ...&. \\
~& f_{an-a+1} & ... & f_M &0 & ... &~~~~0~~~, 
\end{array}
\]
where $a$ is the largest positive integer such that $an-a+1\leq M.$

Now, note that calculating the polynomial $f(\text{mod}~p)$, is equivalent to adding up the rows of the arrangement, while retaining the coefficient $f_0$ as it is. There are $\left\lceil\frac{M}{n-1} \right\rceil$ rows in the arrangement, and adding any two rows requires at most $n-1$ additions. Thus, the total number of bit additions is $O(M).$
\end{IEEEproof}

We are now ready to prove Proposition \ref{propcomplexitygcd}.
\subsection{Proof of Proposition \ref{propcomplexitygcd}}
\begin{IEEEproof}
The worst-case for Algorithm \ref{alg:coprime} would be $j=m.$ By Lemma \ref{lemmafindinggcomplexity}, computing $f(X)(\text{mod}~p_i(X))$ for some $p_i(X) \in \cal P$ takes at most $M$ operations. As there are $m$ such $p_i(X)$s, evaluating the remainders $r(X)$s costs  $Mm$ operations at most. Let 
\[
f(X)(\text{mod}~p_j(X))=\tilde{f}(X)
\]
be the non-zero polynomial of degree at most $2^m=2^j.$

Now, we have to determine the complexity in obtaining the polynomial of degree $m$ which is coprime with $f(X)$ (or equivalently with $\tilde{f}(X)$). 

There are approximately $\frac{2^j}{j}$ irreducible polynomials of order $j.$ It is known (see \cite{BoM}, for example) that for any two polynomials $p(X)$ and $q(X)$ (with degree $w$ of $p(X)$  larger than degree of $q(X)$), the complexity of dividing $p(X)$ by $q(X)$ (or equivalently, calculating $p(X)(\text{mod}~q(X))$) is $wlog(w).$ Thus, the complexity of dividing $\tilde{f}(X)$ by every possible irreducible polynomial of degree $j=m$ is at most $\frac{2^m}{m}O\left(2^mlog(2^m)\right)=O(2^{2m}).$

Thus, the total complexity for finding the least degree polynomial $g(X)$ coprime with $f(X)$ (which is assured of having a coprime factor of degree $m+1$) is at most $O(2^{2m})+O(Mm).$
\end{IEEEproof}

\begin{table*}
\centering
\small
\caption{$_6C_3$ network - Algorithm \ref{alg:construction} (together with Algorithm \ref{alg:coprime})}
\begin{tabular}{|c|c|c|c|c|}
\hline
\textbf{Algorithm parameter} & \multicolumn{2}{|c|}{ \textbf{Global encoding vectors} $\boldsymbol{\cal A}$} &  \multicolumn{2}{|c|}{ \textbf{Global encoding vectors} $\boldsymbol{\cal B}$}\\
\cline{2-5}
 & \textbf{Prim. poly. $b_1(X)$} &\textbf{Prim. poly. $b_2(X)$} &\textbf{Prim. poly. $b_1(X)$} &\textbf{Prim. poly. $b_2(X)$} \\
\hline
Degree of $f(X),$ the product of the & & & & \\
$20$ determinant polynomials & $20$ & $40$ & $30$ & $55$ \\
\hline
$p(X)$: First $p_i(X)$ for which  & & & & None of the form \\
$f(X)(\text{mod}~p_i(X))$ is non-zero & $X^4+X$ & $X^8+X$ & $X^8+X$ & $X^{2^i}+X,$ for $i\leq 4$\\ 
\hline
$f(X)(\text{mod}~p(X))$ &  $X^2+X$ & $X^7+X^6+X^3+X$ & $X^7+X^6+X^5+X^2$ & Not applicable\\
\hline
$g(X)$: Least degree & & & & \\
polynomial coprime to $f(X)$ & $X^2+X+1$ & $X^3+X+1$ & $X^3+X+1$ & Not applicable\\
\hline
Resultant network code & 
$
\begin{array}{c}
\left[
\begin{array}{c}
1 \\
0\\
0
\end{array}
\right]
\left[
\begin{array}{c}
0\\
1\\
0
\end{array}
\right]\\\\
\left[
\begin{array}{c}
0 \\
0\\
1
\end{array}
\right]
\left[
\begin{array}{c}
1\\
1\\
1
\end{array}
\right]\\\\
\left[
\begin{array}{c}
1 \\
\beta_4\\
\beta_4^2
\end{array}
\right]
\left[
\begin{array}{c}
 1 \\
\beta_4^2\\
\beta_4
\end{array}
\right]
\end{array}
$ 
&
$
\begin{array}{c}
\left[
\begin{array}{c}
1 \\
0\\
0
\end{array}
\right]
\left[
\begin{array}{c}
0\\
1\\
0
\end{array}
\right]\\\\
\left[
\begin{array}{c}
0 \\
0\\
1
\end{array}
\right]
\left[
\begin{array}{c}
1\\
1\\
1
\end{array}
\right]\\\\
\left[
\begin{array}{c}
1 \\
\beta_8\\
\beta_8^4
\end{array}
\right]
\left[
\begin{array}{c}
 1 \\
\beta_8^4\\
\beta_8^2
\end{array}
\right]
\end{array}
$
&
$
\begin{array}{c}
\left[
\begin{array}{c}
1 \\
0\\
0
\end{array}
\right]
\left[
\begin{array}{c}
0\\
1\\
0
\end{array}
\right]\\\\
\left[
\begin{array}{c}
0 \\
0\\
1
\end{array}
\right]
\left[
\begin{array}{c}
1\\
1\\
1
\end{array}
\right]\\\\
\left[
\begin{array}{c}
1 \\
\beta_8\\
\beta_8^3
\end{array}
\right]
\left[
\begin{array}{c}
 1 \\
\beta_8^3\\
\beta_8^6
\end{array}
\right]
\end{array} 
$
& Not applicable \\
\hline
\end{tabular}
\label{tab2}
\end{table*}
\section{Proof of Proposition \ref{thmcomplexity}}
\label{appendixthmcomplexity}
\begin{IEEEproof}
The statement of the theorem is easy to verify for $n=1.$ Therefore, let $n \geq 2.$ Let $g(k)=f(k)-f(k+1),$ for some $k,$ such that $0\leq k \leq n-1.$ Then, 
\begin{align*}
g(k)&=\left(\begin{array}{c}n \\k \end{array}\right)k^3-\left(\begin{array}{c}n \\k+1 \end{array}\right)(k+1)^3 \\
&=\left(\begin{array}{c}n \\k \end{array}\right)\left(k^3 - \frac{(n-k)}{(k+1)}(k+1)^3\right)\\
&=\left(\begin{array}{c}n \\k \end{array}\right)\left(2k^3+k^2(2-n)+k(1-2n)-n\right)\\
&=\left(\begin{array}{c}n \\k \end{array}\right)\tilde{g}(k),
\end{align*}
where $\tilde{g}(k)=\left(2k^3+k^2(2-n)+k(1-2n)-n\right).$ Proving the statement of the theorem is then equivalent to showing that both of the following two statements are true, which we shall do separately for even and odd values of $n$.
\begin{itemize}
\item $\tilde{g}(k) < 0$ for all integers $0 \leq k \leq \lceil\frac{n}{2} \rceil.$
\item $\tilde{g}(k) > 0$ for all integers $\lceil\frac{n}{2} \rceil + 1 \leq k \leq n-1.$ 
\end{itemize}
\textit{Case-A} ($n$ \textit{is even}): 
Let $k = \frac{n}{2} + i,$ for some integer $i$ such that $-\frac{n}{2} \leq i \leq \frac{n}{2}-1.$ Then,
\begin{align}
\nonumber
\tilde{g}(k)& = 2k^3+k^2(2-2k+2i)+k(1-4k+4i)-2k+2i\\
\label{eqn100}
\tilde{g}(k)&=k^2(-2+2i) + k(4i-1)+2i.
\end{align}

For $-\frac{n}{2} \leq i \leq 0,$ it is clear from (\ref{eqn100}) that $\tilde{g}(k) < 0.$ If $1 \leq i \leq \left(\frac{n}{2}-1\right),$ it is clear that $\tilde{g}(k) > 0.$ Thus, for even values of $n$, the theorem is proved.\\
\textit{Case-B} ($n$ \textit{is odd}): 
Let $k = \lceil\frac{n}{2} \rceil + i = \left(\frac{n+1}{2}\right) + i,$ for some integer $i$ such that $-\left(\frac{n+1}{2}\right) \leq i \leq \left(\frac{n-3}{2}\right).$ Then,
\begin{align}
\nonumber
\tilde{g}(k)& = 2k^3\hspace{-0.7cm}&+~&k^2(2-2k+2i+1)\\
\nonumber
&\hspace{-0.6cm}&+~&k(1-4k+4i+2)-2k+2i+1\\
\label{eqn101}
\tilde{g}(k)&=&&\hspace{-1cm}k^2(-1+2i) + k(1+4i)+2i+1.
\end{align}

Now, for $i=0,$ $k= \left(\frac{n+1}{2}\right) \geq 2$ (as $n \geq 2$ and is odd). Hence, $\tilde{g}(k)=-k^2+k+1 < 0$ for $i=0.$ If $-\left(\frac{n+1}{2}\right) \leq i < 0,$ then by (\ref{eqn101}), it is clear that $\tilde{g}(k) < 0.$ Thus for all $-\left(\frac{n+1}{2}\right) \leq i \leq 0,$ $\tilde{g}(k) < 0.$

For $1 \leq i \leq \left(\frac{n-3}{2}\right),$ again by (\ref{eqn101}), it is clear that $\tilde{g}(k) > 0,$ and thus the theorem holds for odd values of $n.$ This completes the proof.
\end{IEEEproof}
\section{Example - Network coding}
\label{appexm}
\begin{figure}[htbp]
\centering
\includegraphics[totalheight=2.3in,width=3.3in]{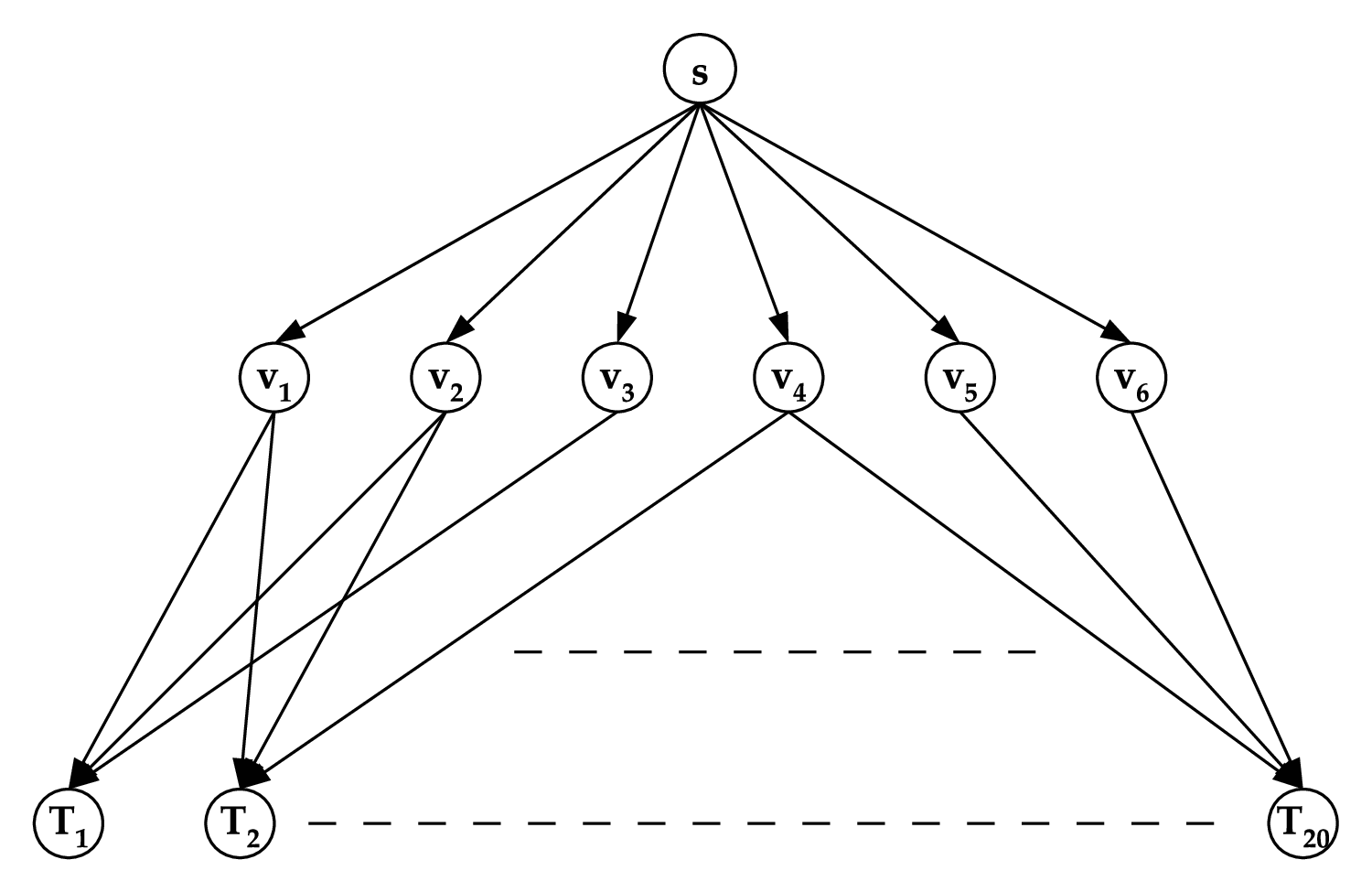}
\caption{$_6C_3$ network with $20$ sinks}	
\label{fig:6C3Network}	
\end{figure}
\begin{example}
\label{exm1}
Consider the $\left(\begin{array}{c}6 \\3 \end{array}\right)$ network shown in Fig. \ref{fig:6C3Network}. This network has $20$ sinks, each of which has $3$ incoming edges from some $3$-combination of the $6$ intermediate nodes, thus the mincut $h$ being $3$. Using the methods in \cite{KoM,JSCEEJT,Har}, a $3$-dimensional network code can be constructed for this network as long as the field size $q>20.$ Let $q=2^5.$ Consider the following sets of vectors in $\mathbb{F}_{32}^3,$ with $\beta$ being a primitive element of $\mathbb{F}_{32}.$
\[
\small
{\cal A}=\left\{
\begin{array}{c}
\left[
\begin{array}{c}
1 \\
0\\
0
\end{array}
\right],
\left[
\begin{array}{c}
0\\
1\\
0
\end{array}
\right],
\left[
\begin{array}{c}
0 \\
0\\
1
\end{array}
\right],\\\\

\left[
\begin{array}{c}
1\\
1\\
1
\end{array}
\right],
\left[
\begin{array}{c}
1 \\
\beta\\
\beta^{18}
\end{array}
\right],
\left[
\begin{array}{c}
 1 \\
\beta^{18}\\
\beta^{5}
\end{array}
\right]
\end{array}
\right\},
\]
\[
\small
{\cal B}=\left\{
\begin{array}{c}
\left[
\begin{array}{c}
1 \\
0\\
0
\end{array}
\right],
\left[
\begin{array}{c}
0\\
1\\
0
\end{array}
\right],
\left[
\begin{array}{c}
0 \\
0\\
1
\end{array}
\right],\\\\

\left[
\begin{array}{c}
1\\
1\\
\beta^6
\end{array}
\right],
\left[
\begin{array}{c}
1 \\
\beta\\
\beta^{18}
\end{array}
\right],
\left[
\begin{array}{c}
 1 \\
\beta^{18}\\
\beta^{5}
\end{array}
\right]
\end{array}
\right\}.
\]

Let $b_1(X)=X^5+X^2+1$ and $b_2(X)=X^5+X^3+X^2+X+1,$ both of them being primitive polynomials of degree $5.$ Note that $\cal A$ and $\cal B$ are valid choices (using either $b_1(X)$ or $b_2(X)$ as the primitive) for the global encoding vectors of the $6$ outgoing edges from the source, representing deterministic network coding solutions for a $3$-dimensional network code for this network. We assume that the intermediate nodes simply forward the incoming symbols to their outgoing edges, i.e., their local encoding coefficients are all $1.$

Table \ref{tab2} illustrates the results obtained with the execution of Algorithm \ref{alg:construction}, with Algorithm \ref{alg:coprime} being used to compute the coprime polynomial for this network with the original deterministic solutions being $\cal A$ or $\cal B,$ with $b_1(X)$ and $b_2(X)$ as the primitive polynomial of $\mathbb{F}_{32}$. 
%
The solutions (global encoding vectors of the $6$ edges from the source) obtained for the $\left(\begin{array}{c}6 \\3 \end{array}\right)$ network, after the modulo operations of the individual coding coefficients using the polynomial $g(X),$ are also shown in Table \ref{tab2}. It can be checked that both of these sets of vectors are valid network coding solutions for a $3$-dimensional network code for the $\left(\begin{array}{c}6 \\3 \end{array}\right)$ network. 

It is seen that for the set $\cal A$ being the choice of the network code in the first step of Algorithm \ref{alg:construction} and with $b_1(X)$ being the primitive polynomial, the final coprime polynomial has degree $2$ and thus resulting in a code $\mathbb{F}_4,$ which is in fact the smallest possible field for which a solution exists for this network. For $\cal B$ with the primitive polynomial $b_2(X),$ no solutions are found using characteristic two finite fields of cardinality less than $32.$
\end{example}

\end{document}